\newtheorem{theorem}{Theorem}
\newtheorem{lemma}{Lemma}
\theoremstyle{remark}
\newtheorem*{remark}{Remark}
\begin{document}
\newcommand{\real}{\textrm{Re}\:}
\newcommand{\sto}{\stackrel{s}{\to}}
\newcommand{\Tr}{\textrm{Tr}\:}
\newcommand{\supp}{\textrm{supp}\:}
\newcommand{\wto}{\stackrel{w}{\to}}
\newcommand{\ssto}{\stackrel{s}{\to}}
\newcounter{foo}
\providecommand{\norm}[1]{\lVert#1\rVert}
\providecommand{\abs}[1]{\lvert#1\rvert}

\title{Zero Energy Bound States in Three--Particle Systems}

\author{Dmitry K. Gridnev}
\affiliation{FIAS, Ruth-Moufang-Stra{\ss}e 1, D--60438 Frankfurt am Main, Germany}
\altaffiliation[On leave from:  ]{ Institute of Physics, St. Petersburg State University, Ulyanovskaya 1, 198504 Russia}

\begin{abstract}
Under certain restrictions on pair--potentials it is proved that the eigenvalues in the three--particle system are absorbed at zero energy threshold if there is no negative energy bound states and zero energy resonances in particle pairs.
\end{abstract}

\maketitle


\section{Introduction}\label{sec:1}

We consider the $N$--particle Schr\"odinger operator
\begin{equation}\label{e111}
    H(\lambda) = H_0 - \lambda \sum_{1 \leq i<j \leq N} V_{ij}(r_i - r_j),
\end{equation}
where $\lambda > 0$ is a coupling constant, $H_0$ is a kinetic energy operator with the center of mass removed, $r_i \in \mathbb{R}^3$ are particle poaition vectors, the pair potentials are real (further restrictions on the potentials would be given later). Suppose that for $\lambda$  in the vicinity of some $\lambda_{cr} < \infty$ there is a bound state $\psi(\lambda) \in D(H_0)$ with the energy $E(\lambda) < \inf \sigma_{ess} (H(\lambda))$ and $E(\lambda) \to \inf \sigma_{ess}(H(\lambda_{cr}))$ when $\lambda \to \lambda_{cr}$. The question whether $E(\lambda_{cr}) \in \sigma_{pp} (H(\lambda_{cr}))$ was considered in various contexts in \cite{klaus,klaus2,simon,ostenhof,gest,prl} (the list of references is by far incomplete).

In \cite{karner}, Theorem~3.3,  it was claimed that if $V_{ij} \in C_0^\infty (\mathbb{R}^3)$, $V_{ij} \geq 0$, and none of the subsystems has negative energy bound states or zero energy resonances, then there exists $\psi(\lambda_{cr}) \in D(H) = D(H_0), \psi(\lambda_{cr}) \neq 0$ such that $H(\lambda_{cr}) \psi(\lambda_{cr})= 0$. Unfortunately, the proof in \cite{karner} contains a mistake. In Eq.~53 of \cite{karner} the mixed term containing first order derivatives is erroneously omitted, which makes the results of Ref.~35 in \cite{karner} concerning the fall off of the wave function inapplicable. And it is not immediately clear how the arising hurdle can be overcome. Here we prove the result stated by Karner for $N=3$ with a different method and for a larger class of potentials (Theorem~\ref{th:2} of this paper). In the next publication \cite{submit} we demonstrate that the condition on the absence of zero energy resonances in particle pairs is essential, {\it i.e.} under certain conditions appearance of a zero energy resonance in one of the two--body subsystems makes the statement false.

Note, that the 3--body case differs essentially from the 2--body case, where under similar restrictions on pair potentials the zero--energy \textbf{ground} state can never be a bound state \cite{klaus,prl}. The conclusion that a zero energy resonance in the three--body system is in fact a bound state is unexpected and has far reaching physical consequences, which concern the size of a system in its ground state (we ignore the particle statistics here). In the two--body case the size of the system in the ground state can be made infinite by tuning, for example, the coupling constant so that the bound state with negative energy approaches the zero energy threshold \cite{prl}. In the three body case the size of the system remains finite, given that in the course of tuning the coupling constants of two--body subsystems stay away from critical values, at which the two--body zero energy resonances appear. To underline the connection with the size of the system we formulate the proofs in terms of spreading and non--spreading sequences of bound states. The result has applications in the physics of halo nuclei \cite{vaagen}, molecular physics \cite{fedorov} and Efimov states \cite{efimov}.

The paper is organized as follows. In Sec.~\ref{sec:2} we use the ideas of Zhislin \cite{zhislin} to set up the framework for the analysis of eigenvalue absorption in connection with the spreading of sequences of wave functions. Here we prefer to maintain generality and do not restrict ourselves to $N=3$. In Sec.~\ref{sec:3} we consider the 3--body case and employ the equations of Faddeev type to prove Theorem~\ref{th:2}, which is the main result of the paper.

\section{Spreading and Bound States at Threshold}\label{sec:2}

The main result of this section (Theorem~\ref{th:1}) appears implicitly in \cite{zhislin}, where Zhislin considers minimizing sequences of the energy functional in Sobolev spaces. For our purposes it is more useful to consider sequences of eigenstates and use an approach in the spirit of \cite{simon}.

Consider the $N$-particle Hamiltonian, which depends on a parameter
\begin{gather}
    H(\lambda) = H_0 + V(\lambda) ,  \label{xc31}  \\
V(\lambda) =  \sum_{1 \leq i<j \leq N} V_{ij} (\lambda; r_i - r_j ) \label{:xc31},
\end{gather}
where $H_0$ is the kinetic energy operator with the center of mass removed, $V_{ij}$ are pair potentials and $r_i \in \mathbb{R}^3$ are position vectors. For the parameter $\lambda$ we assume that $\lambda \in \mathbb{R}$ (this is done for clarity, in fact, $\lambda$ can take values in a topological space). We impose the following set of restrictions.
\begin{list}{R\arabic{foo}}
{\usecounter{foo}
    \setlength{\rightmargin}{\leftmargin}}
\item $H(\lambda)$ is defined for an infinite sequence of parameter values $\lambda_1, \lambda_2, \ldots $ and $\lambda_{cr}$, where $\lim_{n\to \infty} \lambda_n = \lambda_{cr}$.

\item $|V_{ij} (\lambda; y )| \leq F (y)$ for all $\lambda$ defined in R1, where $V_{ij}, F \in L^2 (\mathbb{R}^3) + L_\infty^\infty (\mathbb{R}^3)$.

\item $\forall f \in C^\infty_0 (\mathbb{R}^{3N-3} )\colon  \lim_{\lambda_n \to \lambda_{cr}} \bigl\| \bigl[ V(\lambda_n) - V(\lambda_{cr}) \bigr] f \bigr\| =
0$.
\end{list}
The symbol $L_\infty^\infty$ denotes bounded Borel functions going to zero at infinity. By R2 $H(\lambda)$ is self-adjoint on $D(H_0 )$ \cite{reed}.

The bottom of the essential spectrum is denoted as
\begin{equation}\label{sigmaess}
  E_{thr} (\lambda) := \inf \sigma_{ess} (H(\lambda)) .
\end{equation}

The set of requirements on the system continues as follows
\begin{list}{R\arabic{foo}}
{\usecounter{foo}
    \setlength{\rightmargin}{\leftmargin}}
\setcounter{foo}{3}
\item
for all $ \lambda_n $ there is $E(\lambda_n) \in \mathbb{R}, \psi(\lambda_n) \in D(H_0)$ such that $H(\lambda_n) \psi(\lambda_n) = E(\lambda_n) \psi(\lambda_n)$, where $\| \psi(\lambda_n) \| = 1$ and $E(\lambda_n) < E_{thr}(\lambda_n)$.

\item
$\lim_{\lambda_n \to \lambda_{cr}} E(\lambda_n) = \lim_{\lambda_n \to \lambda_{cr}} E_{thr}(\lambda_n) = E_{thr} (\lambda_{cr})$ . 
\end{list}

The requirements R4-5 say that for each $n$ the system has a level below the continuum and for $\lambda_n \to \lambda_{cr}$ the energy of this level approaches the bottom of the continuous
spectrum.

In the proofs we shall use the term ``spreading sequence'', which is due to Zhislin \cite{zhislin}. The sequence of functions $f_n (x) \in L^2 (\mathbb{R}^d)$
\textbf{spreads} if there is $a>0$ such that $\limsup_{n \to \infty} \|  \chi_{\{x||x| > R\}} f_n \| > a$ for all $R>0$. (the notation $\chi_{\Omega}$  always means the characteristic function of the set $\Omega$). The sequence $f_n$ is
\textbf{totally spreading}  if $\lim_{n \to \infty} \|  \chi_{\{x||x| \leq R\}} f_n \| = 0$ for all $R >0$.

\begin{lemma}\label{lem:1}
Let $H(\lambda)$ be a Hamiltonian satisfying R1-5. Then
\begin{equation}
\sup_n \| H_0 \psi(\lambda_n)\| < \infty .
\end{equation}
\end{lemma}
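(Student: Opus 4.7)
The plan is to use the eigenvalue equation $H_0 \psi(\lambda_n) = \bigl[ E(\lambda_n) - V(\lambda_n) \bigr] \psi(\lambda_n)$ and bound the right-hand side using R2 and R5. Concretely, I would write
\begin{equation*}
\| H_0 \psi(\lambda_n) \| \leq |E(\lambda_n)| + \sum_{i<j} \bigl\| V_{ij}(\lambda_n; r_i - r_j)\, \psi(\lambda_n) \bigr\|,
\end{equation*}
using $\| \psi(\lambda_n) \| = 1$. By R5 the sequence $E(\lambda_n)$ is convergent and therefore bounded, so the first term is harmless. Everything rests on controlling the potential term uniformly in $n$.

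For that control, the key observation is that R2 gives $|V_{ij}(\lambda_n; y)| \leq F(y)$ with the \textbf{same} majorant $F \in L^2(\mathbb{R}^3) + L_\infty^\infty(\mathbb{R}^3)$ for every $n$. I would pass to Jacobi coordinates on $\mathbb{R}^{3N-3}$ in which $r_i - r_j$ is one of the coordinates, so that the Laplacian in this variable is dominated by $H_0$. A standard Kato--Rellich estimate for potentials in $L^2(\mathbb{R}^3) + L_\infty^\infty(\mathbb{R}^3)$ then yields, for every $\epsilon > 0$, a constant $C_\epsilon$ depending only on $F$ (hence independent of $n$) such that
\begin{equation*}
\| V_{ij}(\lambda_n; r_i - r_j) \phi \| \leq \| F(r_i-r_j) \phi \| \leq \epsilon \| H_0 \phi \| + C_\epsilon \| \phi \|
\end{equation*}
for all $\phi \in D(H_0)$. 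Summing this estimate over the $M = N(N-1)/2$ pairs and applying it to $\psi(\lambda_n)$ gives
\begin{equation*}
\sum_{i<j} \| V_{ij}(\lambda_n; r_i - r_j) \psi(\lambda_n) \| \leq M \epsilon \| H_0 \psi(\lambda_n) \| + M C_\epsilon.
\end{equation*}

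Choosing $\epsilon$ with $M \epsilon < 1$ and substituting into the first displayed inequality, the term $\| H_0 \psi(\lambda_n) \|$ can be absorbed on the left, yielding
\begin{equation*}
(1 - M\epsilon)\, \| H_0 \psi(\lambda_n) \| \leq \sup_n |E(\lambda_n)| + M C_\epsilon,
\end{equation*}
which is the desired uniform bound. The only subtlety I anticipate is making the $\epsilon$--$C_\epsilon$ estimate genuinely uniform in $n$; this is handled cleanly by R2, which fixes a single $n$-independent majorant $F$, so that the Kato--Rellich constants depend only on the decomposition of $F$ into its $L^2$ and $L_\infty^\infty$ parts. Everything else is essentially bookkeeping.
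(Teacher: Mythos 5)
Your proof is correct and follows essentially the same route as the paper: both use the eigenvalue equation, the $n$-independent majorant $F$ from R2, the Kato--Rellich relative bound $0$ of $F(r_i-r_j)$ with respect to $H_0$, and an absorption argument with $\epsilon$ small enough. The only cosmetic difference is that you absorb the small term into $\|H_0\psi(\lambda_n)\|$ directly, while the paper absorbs it into $\|V(\lambda_n)\psi(\lambda_n)\|$ first; the two are equivalent.
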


\begin{proof}
The statement represents a
well-known fact, see e. g.  \cite{zhislin} but for
completeness we give the proof right here.
The Shr\"odinger equation $H_0 \psi (\lambda_n)= -V(\lambda_n)\psi (\lambda_n) + E(\lambda_n)
\psi(\lambda_n)$ gives the bound $\| H_0 \psi (\lambda_n) \| \leq \| V(\lambda_n
)\psi (\lambda_n) \| + O(1)$.
By R2 $|V_{ij}| \leq F_{ij}$, where for a shorter notation we denote $F_{ij} := F(x_i - x_j)$. Using
that $F_{ij}$ is $H_0$ bounded \cite{reed}
with a relative bound 0 we obtain
\begin{gather}
    \|V(\lambda_n) \psi (\lambda_n)\| = \bigl\| \sum_{i<j} V_{ij} (\lambda_n; x_i - x_j )  \psi (\lambda_n) \bigr\|
    \leq \frac{N(N-1)}2 \bigl\| F_{ij}  \psi (\lambda_n) \bigr\| \leq \label{kuka}\\
a \|H_0 \psi (\lambda_n)\| + b \leq
    a \| V(\lambda_n)\psi (\lambda_n)\| + O(1) , \label{kuka:1}
\end{gather}
where $a, b >0$ are constants and $a$ can be
chosen as small as pleased. Setting $a = 1/2$ and dividing
(\ref{kuka})--(\ref{kuka:1}) by $\| V(\lambda_n)
\psi(\lambda_n) \|$ we find that $\| V(\lambda_n) \psi(\lambda_n) \| $, respectively $\| H_0 \psi (\lambda_n)\| $ must be uniformly bounded.  \end{proof}

The following theorem illustrates the connection between non-spreading and bound states at threshold.

\begin{theorem}[Zhislin]\label{th:1}
Let $H(\lambda)$ satisfy R1-5. If the sequence  $\psi (\lambda_n)$ does not totally spread then $H(\lambda_{cr})$ has a  bound state at threshold $\psi_{cr} \in D(H_0)$, that is
\begin{equation}\label{xc11}
H (\lambda_{cr}) \psi_{cr} = E_{thr}(\lambda_{cr}) \psi_{cr} ,
\end{equation}.
\end{theorem}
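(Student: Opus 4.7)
The plan is to produce $\psi_{cr}$ as a weak limit point of the eigenfunction sequence and then transfer the eigenvalue equation to the limit. By Lemma~\ref{lem:1} the sequence $\{\psi(\lambda_n)\}$ is bounded in the graph norm of $H_0$, hence in $H^2(\mathbb{R}^{3N-3})$. Weak compactness yields a subsequence (not relabeled) with $\psi(\lambda_n) \wto \psi_{cr}$ and $H_0\psi(\lambda_n) \wto H_0\psi_{cr}$ in $L^2$, so in particular $\psi_{cr} \in D(H_0)$ with $\|\psi_{cr}\| \leq 1$.

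To show $\psi_{cr} \neq 0$ I would invoke the non-total-spreading hypothesis: there exist $R,a>0$ and a further subsequence along which $\|\chi_{\{|x|\leq R\}}\psi(\lambda_n)\|\geq a$. Since $\{\psi(\lambda_n)\}$ is uniformly bounded in $H^1$ of the ball of radius $R$, the Rellich--Kondrachov compact embedding delivers strong $L^2$ convergence $\chi_{\{|x|\leq R\}}\psi(\lambda_n)\to \chi_{\{|x|\leq R\}}\psi_{cr}$, so $\|\chi_{\{|x|\leq R\}}\psi_{cr}\|\geq a>0$ and hence $\psi_{cr}\neq 0$.

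To verify (\ref{xc11}) I would test the eigenvalue equation against arbitrary $\varphi \in C_0^\infty(\mathbb{R}^{3N-3})$. Self-adjointness of $H_0$ and weak convergence give $\langle \varphi, H_0\psi(\lambda_n)\rangle = \langle H_0\varphi,\psi(\lambda_n)\rangle \to \langle H_0\varphi,\psi_{cr}\rangle = \langle \varphi, H_0\psi_{cr}\rangle$. For the potential I would write $\langle \varphi, V(\lambda_n)\psi(\lambda_n)\rangle = \langle V(\lambda_n)\varphi,\psi(\lambda_n)\rangle$ (using that $V(\lambda_n)$ is real multiplication); by R3, $V(\lambda_n)\varphi \to V(\lambda_{cr})\varphi$ strongly in $L^2$, and combining with $\psi(\lambda_n)\wto\psi_{cr}$ this converges to $\langle V(\lambda_{cr})\varphi,\psi_{cr}\rangle = \langle \varphi, V(\lambda_{cr})\psi_{cr}\rangle$. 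Using R5 for $E(\lambda_n)\to E_{thr}(\lambda_{cr})$, the limit of $\langle \varphi,H(\lambda_n)\psi(\lambda_n)\rangle = E(\lambda_n)\langle \varphi,\psi(\lambda_n)\rangle$ becomes $\langle \varphi, [H(\lambda_{cr}) - E_{thr}(\lambda_{cr})]\psi_{cr}\rangle = 0$; density of $C_0^\infty$ then yields (\ref{xc11}).

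The main obstacle is forcing the weak limit to be nonzero: a priori the $L^2$-mass of $\psi(\lambda_n)$ could escape to infinity as $\lambda_n\to\lambda_{cr}$, which is precisely the situation when total spreading would occur. The hypothesis of non-total-spreading, together with the $H^2$-bound of Lemma~\ref{lem:1} (so that Rellich compactness applies on every ball), is exactly what rules this out. The potential-convergence step, by contrast, is comparatively gentle because we place $\varphi$ on the potential side, so R3 is applied only to smooth compactly supported functions rather than to the wave functions themselves, which are only weakly convergent.
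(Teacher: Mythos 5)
Your proof follows essentially the same route as the paper: extract a weakly convergent subsequence bounded in the graph norm of $H_0$ (the paper's Lemma~\ref{lem:1} and Lemma~\ref{lem:2}), use relative compactness of $\chi_{\{|x|\leq R\}}$ (your Rellich--Kondrachov step is exactly the paper's Lemma~\ref{lem:3}(b)) to get $\psi_{cr}\neq 0$, and pass to the limit in the weak formulation against $C_0^\infty$ test functions using R3 and R5. The only caveat is the order of your extractions: you should first pass to the subsequence retaining mass $\geq a$ in $\{|x|\leq R\}$ (which non-total-spreading supplies for the \emph{original} sequence) and only then extract a weakly convergent sub-subsequence, since an arbitrary weakly convergent subsequence chosen first could itself totally spread and have weak limit zero.
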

For the proof we need a couple of technical Lemmas.

\begin{lemma}\label{lem:2}
Suppose $f_n \in D(H_0)$ is such that $\sup_n \| H_0 f_n\| < \infty$ and $f_n \wto f_0$. Then (a) $f_0 \in D(H_0)$; (b) for any operator $A$, which is relatively $H_0$ compact $\| A ( f_n - f_0 ) \| \to 0$.
\end{lemma}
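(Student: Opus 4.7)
The plan is to combine weak compactness of bounded sequences in $L^2$ with the self-adjointness of $H_0$ for part (a), and then to reduce part (b) to the definition of $H_0$-compactness by factoring $A$ through the resolvent $(H_0+i)^{-1}$.

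For part (a): since $\sup_n \|H_0 f_n\| < \infty$, Banach--Alaoglu produces a subsequence along which $H_0 f_n \wto g$ for some $g \in L^2$. For any $\phi \in D(H_0)$ I would use
\begin{equation}
\langle H_0 f_n, \phi \rangle = \langle f_n, H_0 \phi \rangle
\end{equation}
and pass to the limit, using $f_n \wto f_0$ on the right and $H_0 f_n \wto g$ on the left, to obtain $\langle g, \phi\rangle = \langle f_0, H_0 \phi\rangle$ for every $\phi \in D(H_0)$. By definition of the adjoint this means $f_0 \in D(H_0^*) = D(H_0)$ with $H_0 f_0 = g$, where self-adjointness of $H_0$ is essential. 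A standard subsequence argument (every weak cluster point of $H_0 f_n$ must equal $H_0 f_0$, which is now uniquely determined) then upgrades this to $H_0 f_n \wto H_0 f_0$ for the full sequence.

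For part (b): I set $g_n := f_n - f_0$, so by (a) we have $g_n \in D(H_0)$, $g_n \wto 0$ and $\sup_n \|H_0 g_n\| < \infty$. The key factorization is
\begin{equation}
A g_n = \bigl[A(H_0+i)^{-1}\bigr]\,(H_0+i)\, g_n .
\end{equation}
I would first check that $(H_0+i) g_n \wto 0$ in $L^2$: the sequence is norm bounded, and for any $\phi \in D(H_0)$ one has $\langle (H_0+i) g_n, \phi\rangle = \langle g_n, (H_0-i)\phi\rangle \to 0$; density of $D(H_0)$ together with the uniform bound extends this to arbitrary $\phi \in L^2$. By definition of relative $H_0$-compactness the operator $A(H_0+i)^{-1}$ is compact, and compact operators carry weakly null sequences to norm null sequences, giving $\|A g_n\| \to 0$.

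The only genuinely delicate point is (a): one cannot move a weak limit through the unbounded operator $H_0$ directly, and self-adjointness is what makes the adjoint identity above available, allowing $H_0$ to be transferred onto a fixed test function where $f_n \wto f_0$ does its job. Once (a) is in hand, (b) is a routine resolvent-compactness argument.
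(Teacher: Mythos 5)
Your proposal is correct and follows essentially the same route as the paper: identify the weak limit of $H_0 f_n$ as $H_0 f_0$ via the adjoint identity and self-adjointness, then factor $A$ through the resolvent and use that $A(H_0+z)^{-1}$ maps the weakly null sequence $(H_0+z)(f_n-f_0)$ to a norm-null one. The only cosmetic differences are that you extract a convergent subsequence first and then run the uniqueness-of-cluster-points argument (the paper argues uniqueness directly by contradiction) and that you use $(H_0+i)^{-1}$ in place of $(H_0+1)^{-1}$.
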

\begin{proof}
First, let us prove that the sequence $H_0 f_n $ is weakly convergent. A proof is by contradiction. Suppose $H_0 f_n $ has two weak limit points, {\it i.e.} there exist $f'_k, f''_k$, which are subsequences of $f_n$ and for which $H_0 f'_k \wto \phi_1$ and $H_0 f''_k \wto \phi_2$, where $\phi_{1,2} \in L^2$ and $\phi_1 \neq \phi_2$. On one hand, because $\phi_1 \neq \phi_2$ and $D(H_0)$ is dense in $L^2$ there is $g \in D(H_0)$ such that $(\phi_1 - \phi_2 , g) \neq 0$. On the other hand, using that $f'_k \wto f_0$ and  $f''_k \wto f_0$ we get
\begin{equation}\label{:labi:}
    (\phi_1 - \phi_2 , g) = \lim_{k \to \infty} \left[ \bigl( H_0 (f'_k - f''_k), g\bigr)\right] =
    \lim_{k \to \infty} \left[ \bigl( (f'_k - f''_k), H_0 g\bigr)\right] = 0,
\end{equation}
a contradiction. Hence, $H_0 f_n \wto G$, where $G \in L^2$. $\forall f \in D(H_0)$ by self-adjointness of $H_0$ we obtain $(H_0 f, f_0) = \lim_{n \to \infty} (H_0 f, f_n) = (f, G)$. Thus $f_0 \in D(H_0)$ and $G = H_0 f_0$, which proves (a). To prove (b) note that $(H_0 + 1)(f_n - f_0 ) \wto 0$. Using that compact operators acting on weakly convergent sequences make them converge in norm we get
\begin{equation}\label{fck}
A(f_n - f_0 )  = A(H_0 + 1)^{-1} (H_0 + 1) (f_n - f_0 )\to 0 ,
\end{equation}
since $A(H_0 + 1)^{-1}$ is compact by condition of the lemma.   \end{proof}

\begin{lemma}\label{lem:3}
Suppose $f_n \in D(H_0)$ is such that $\sup_n \| H_0 f_n\| < \infty$ and $f_n \wto f_0$.
Then (a) if $f_n$ does not spread then $f_n \to f_0$ in norm; (b) if $f_n$ does not totally spread then $f_0 \neq 0$.
\end{lemma}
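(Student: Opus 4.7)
The workhorse for both parts is the observation that for every $R>0$ the multiplication operator $\chi_{\{|x|\leq R\}}$ is relatively $H_0$-compact on $L^2(\mathbb{R}^{3N-3})$: indeed $\chi_{\{|x|\leq R\}}(H_0+1)^{-1}$ is a compact operator, as can be seen either directly from a Rellich-type argument or by writing it as a product of an $L^\infty$ cutoff with a function of momentum that tends to $0$ at infinity. Consequently the hypotheses of Lemma~\ref{lem:2}(b) are met with $A=\chi_{\{|x|\leq R\}}$, and we obtain the single crucial convergence
\begin{equation}
 \lim_{n\to\infty}\bigl\|\chi_{\{|x|\leq R\}}(f_n-f_0)\bigr\|=0 \quad \text{for every } R>0.
\end{equation}

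To prove (a), I would fix $\varepsilon>0$ and use three ingredients. First, since $f_0\in L^2$, choose $R_1$ so large that $\|\chi_{\{|x|>R\}}f_0\|<\varepsilon$ for all $R\ge R_1$. Second, the failure of spreading implies that for every $a>0$ there is $R(a)$ with $\limsup_n\|\chi_{\{|x|>R(a)\}}f_n\|\le a$, so take $R\ge R_1$ large enough that $\limsup_n\|\chi_{\{|x|>R\}}f_n\|<\varepsilon$. Third, by the displayed convergence above, $\|\chi_{\{|x|\le R\}}(f_n-f_0)\|<\varepsilon$ for all large $n$. Splitting $\|f_n-f_0\|\le\|\chi_{\{|x|\le R\}}(f_n-f_0)\|+\|\chi_{\{|x|>R\}}f_n\|+\|\chi_{\{|x|>R\}}f_0\|$ gives $\|f_n-f_0\|<3\varepsilon$ for large $n$, which proves norm convergence.

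For (b), the failure of total spreading means precisely that there exist $R>0$, $c>0$, and a subsequence $\{f_{n_k}\}$ with $\|\chi_{\{|x|\le R\}}f_{n_k}\|\ge c$ for all $k$. The subsequence still satisfies $f_{n_k}\wto f_0$ and $\sup_k\|H_0 f_{n_k}\|<\infty$, so the compactness argument recalled above (applied to this subsequence) yields $\chi_{\{|x|\le R\}}f_{n_k}\to\chi_{\{|x|\le R\}}f_0$ in norm. Passing to the limit gives $\|\chi_{\{|x|\le R\}}f_0\|\ge c>0$, hence $f_0\neq 0$.

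Neither step looks genuinely hard once Lemma~\ref{lem:2}(b) is in hand; the only point worth being careful about is the compactness of $\chi_{\{|x|\le R\}}(H_0+1)^{-1}$ on the $(3N-3)$-dimensional internal coordinate space (where $H_0$ is a positive combination of Laplacians after removing the center of mass), which is a standard fact but ought to be cited. Everything else is a bookkeeping exercise on $L^2$ tail bounds plus one subsequence extraction.
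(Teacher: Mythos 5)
Your proof is correct and follows essentially the same route as the paper: the key step in both is that $\chi_{\{x\,|\,|x|\leq R\}}$ is relatively $H_0$ compact, so Lemma~\ref{lem:2}(b) gives $\|\chi_{\{x\,|\,|x|\leq R\}}(f_n-f_0)\|\to 0$, after which (a) is the tail-splitting estimate you spell out and (b) follows by passing to the limit on the non-vanishing subsequence (the paper phrases (b) as a contradiction with $f_0=0$, which is the same argument).
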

\begin{proof}
Let us start with (a).
Because $f_n$ does not spread it is enough to show that $\| \chi_{\{x| |x| \leq R\}} (f_n - f_0) \| \to 0$ for all $R$ in norm. And this is true because $\chi_{\{x| |x| \leq R\}}$ is relatively $H_0$ compact \cite{reed,teschl} and Lemma~\ref{lem:2} applies. To prove (b) let us assume by contradiction that $f_n \wto 0$. Using the same arguments we get that $\| \chi_{\{x| |x| \leq R\}} f_n \| \to 0$ for all $R$. But this would mean that $f_n$ totally spreads contrary to the condition of the Lemma.   \end{proof}

\begin{proof}[Proof of Theorem~\ref{th:1}]
Because $\psi(\lambda_n)$ does not totally spread there are $a, R > 0$ and a
subsequence $\lambda_k$ such that $\| \chi_{\{x| |x| < R\}} \psi(\lambda_k) \| > a$. From this subsequence by the
Banach-Alaoglu theorem we choose a weakly convergent sub/subsequence
(for which we keep the notation $\psi(\lambda_k)$) such that $\psi(\lambda_k) \wto
\psi_{cr}$, where $\psi_{cr} \in D(H_0)$ by Lemma~\ref{lem:2}. The
sub/subsequence $\psi(\lambda_k)$ does not totally spread and is weakly
convergent, hence,  by Lemma~\ref{lem:3}(b) $\psi_{cr} \neq 0$. For any $f \in
C_{0}^\infty$ we have
\begin{gather}
    \Bigl([H(\lambda_{cr}) - E_{thr}(\lambda_{cr})]f , \psi_{cr} \Bigr) = \lim_{\lambda_n \to \lambda_{cr}} \Bigl( [H(\lambda_{cr}) -E_{thr}(\lambda_n)]f , \psi (\lambda_n) \Bigr) = \\
\lim_{\lambda_n \to \lambda_{cr}} \Bigl( \bigl[ H(\lambda_n) - (V(\lambda_n)-V(\lambda_{cr}) )  -E_{thr}(\lambda_n) \bigr] f ,\psi (\lambda_n) \Bigr)  = \\
\lim_{\lambda_n \to \lambda_{cr}} \Bigl\{ \bigl[E(\lambda_n) -E_{thr}(\lambda_n) \bigr] \Bigl( f ,\psi (\lambda_n) \Bigr)  -
\Bigl([V(\lambda_n)-V(\lambda_{cr}) ]f ,\psi (\lambda_n)\Bigr) \Bigr\} = 0,
\end{gather}
where in the last equation we have used R3,5.
Summarizing, for all $f \in C_{0}^\infty$ we have
\begin{equation}\label{xc12}
    \left(\bigl[H(\lambda_{cr}) - E_{thr}(\lambda_{cr})\bigr]f , \psi_{cr} \right) =  \left(f , \bigl[H(\lambda_{cr}) - E_{thr}(\lambda_{cr})\bigr]\psi_{cr} \right) = 0,
\end{equation}
meaning that (\ref{xc11}) holds.  \end{proof}

The following Lemmas will be needed in the next Section.
\begin{lemma}\label{lem:4}
 A uniformly norm--bounded sequence of functions $f_n \in L^2 (\mathbb{R}^n)$, where every weakly converging subsequence converges also in norm, does not spread.
\end{lemma}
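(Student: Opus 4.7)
The plan is to prove the contrapositive: assuming $f_n$ spreads (and is norm--bounded), construct a weakly convergent subsequence that fails to converge in norm, which directly contradicts the hypothesis of the lemma. The intuition is that spreading says a definite amount of $L^2$--mass escapes to infinity, and such mass cannot be recaptured by any norm limit of a weak accumulation point.

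First, from the spreading hypothesis I would extract a subsequence along which the escaping mass is visible at growing radii. By assumption there is $a>0$ such that for every $R>0$ the inequality $\|\chi_{\{x||x|>R\}}f_n\| > a$ holds for infinitely many $n$. A standard diagonal choice then produces strictly increasing indices $n_1 < n_2 < \cdots$ with $\|\chi_{\{x||x|>k\}}f_{n_k}\| > a$ for every $k$. Set $\tilde f_k := f_{n_k}$. Since $\{\tilde f_k\}$ remains norm--bounded and $L^2(\mathbb{R}^n)$ is reflexive, Banach--Alaoglu supplies a further subsequence $g_j := \tilde f_{k_j}$ with $g_j \wto f_0$ for some $f_0 \in L^2$, and the lower bound transfers to the sub--subsequence as $\|\chi_{\{x||x|>k_j\}} g_j\| > a$ with $k_j \to \infty$.

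Now suppose for contradiction that $g_j \to f_0$ in norm. The triangle inequality gives
\begin{equation}
a < \|\chi_{\{x||x|>k_j\}} g_j\| \leq \|\chi_{\{x||x|>k_j\}} f_0\| + \|g_j - f_0\| .
\end{equation}
As $j\to\infty$ the first term on the right tends to zero by dominated convergence (since $f_0 \in L^2$ and $k_j \to \infty$), while the second tends to zero by the supposed norm convergence. This yields $a \leq 0$, a contradiction. Hence $g_j$ is a weakly convergent subsequence of $f_n$ that fails to converge in norm, contradicting the hypothesis. Therefore $f_n$ cannot spread.

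The only mildly delicate point is the order of the two extractions: one must diagonalise \emph{first} to obtain the lower bound $\|\chi_{\{x||x|>k\}}\tilde f_k\| > a$ at growing radii $R=k$, and only \emph{then} invoke Banach--Alaoglu, so that along the weakly convergent sub--subsequence the same bound still applies at the enlarged radii $k_j \to \infty$. Once this sequencing is arranged, the proof reduces to the single triangle--inequality estimate above and is essentially immediate.
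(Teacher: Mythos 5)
Your proof is correct and follows essentially the same route as the paper: extract by a diagonal argument a subsequence with $\|\chi_{\{x\,|\,|x|>k\}}f_{n_k}\|>a$, pass to a weakly convergent sub--subsequence via Banach--Alaoglu, and observe that the escaping mass prevents norm convergence. The only difference is that you spell out, via the triangle inequality and $\|\chi_{\{x\,|\,|x|>k_j\}}f_0\|\to 0$, the step the paper dismisses as ``easy to see''; this is a welcome but not substantive addition.
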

\begin{proof}
By contradiction, let us assume that $f_n$ spreads. Then it is
possible to extract a subsequence $g_k = f_{n_k}$ with the
property $\| \chi_{\{x| |x| \geq k \}} g_k \| > a$, where $a > 0$
is a constant. On one hand, it is easy to see that $g_k $ with
this property has no subsequences that converge in
norm. On the other hand, by the Banach-Alaoglu theorem $g_k$ must have at least one  weakly converging subsequence, which is norm--convergent by condition of the lemma.  \end{proof}

\begin{lemma}\label{lem:5}
Suppose $g \in C (\mathbb{R}^{3N-3})$ has the property that $|g|\leq 1$ and $g=0$ if $|r_i - r_j | < \delta |x|$, where $\delta$ is a constant. Then the operator $gF(r_i - r_j)$ is relatively $H_0$ compact.
\end{lemma}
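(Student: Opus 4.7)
The plan is to decompose $F = F_1 + F_2$ with $F_1 \in L^2(\mathbb{R}^3)$ and $F_2 \in L_\infty^\infty(\mathbb{R}^3)$ (permitted by R2) and to show, writing $y := r_i - r_j$, that each of $gF_1(y)(H_0+1)^{-1}$ and $gF_2(y)(H_0+1)^{-1}$ is a norm limit of compact operators on $L^2(\mathbb{R}^{3N-3})$. The geometric content of the hypothesis on $g$ is that on $\supp g$ one has $|y|\geq \delta|x|$, and therefore $\supp g \cap \{|y|\leq R\}\subset\{|x|\leq R/\delta\}$. This destroys the $(3N-6)$-dimensional translation invariance of the pure multiplication operator $F(y)$ along the directions orthogonal to $y$, which is exactly the obstruction to compactness of the unmodified operator $F(y)(H_0+1)^{-1}$.

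For the bounded piece $gF_2(y)$: since $F_2$ vanishes at infinity in $\mathbb{R}^3$, the geometric observation forces $gF_2(y)\to 0$ as $|x|\to\infty$ in $\mathbb{R}^{3N-3}$. I would approximate $gF_2$ in the sup norm by the truncations $\chi_{\{|x|\leq R\}}gF_2$; each truncation is a bounded function of compact support in $\mathbb{R}^{3N-3}$, hence composes with $(H_0+1)^{-1}$ to a compact operator (the Rellich-type fact that $\chi_{\Omega}(H_0+1)^{-1}$ is compact for bounded $\Omega$, already used in Lemma~\ref{lem:3}). The sup-norm error tends to zero as $R\to\infty$, and a norm limit of compact operators remains compact.

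For the $L^2$ piece $gF_1(y)$: approximate $F_1$ in $L^2(\mathbb{R}^3)$ by a sequence $F_1^{(n)}\in C_0^\infty(\mathbb{R}^3)$ with $\supp F_1^{(n)} \subset \{|y|\leq R_n\}$. Each $gF_1^{(n)}(y)$ is bounded and, by the geometric observation, supported in $\{|x|\leq R_n/\delta\}$, so $gF_1^{(n)}(y)(H_0+1)^{-1}$ is compact by the same Rellich argument. The residual is controlled by the operator-norm estimate $\|h(y)(H_0+1)^{-1}\|\leq C\|h\|_{L^2(\mathbb{R}^3)}$ valid for every $h\in L^2(\mathbb{R}^3)$, which drives $\|g(F_1 - F_1^{(n)})(y)(H_0+1)^{-1}\|\to 0$ as $n\to\infty$.

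The main technical point, and the place requiring most care, is this last relative bound on $L^2$ potentials. I would establish it by passing to Jacobi coordinates $(y,z)\in\mathbb{R}^3\times\mathbb{R}^{3N-6}$ in which $H_0$ contains the 3D Laplacian $-\Delta_y$ (times a positive mass constant) plus a nonnegative kinetic term in $z$, estimating
\begin{equation*}
\|h(y)f\|_{L^2}^{2} \;\leq\; \int dz\,\|h\|_{L^2_y}^{2}\,\|f(\cdot,z)\|_{L^\infty_y}^{2},
\end{equation*}
and then invoking the 3D Sobolev embedding $H^2(\mathbb{R}^3)\hookrightarrow L^\infty(\mathbb{R}^3)$ together with $\|f(\cdot,z)\|_{H^2_y}\leq C(\|\Delta_y f(\cdot,z)\|+\|f(\cdot,z)\|)$ and $\|\Delta_y f\|\leq C'\|(H_0+1)f\|$. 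Summing the two pieces and using that the compact operators form a norm-closed ideal yields the $H_0$-compactness of $gF(r_i-r_j)$.
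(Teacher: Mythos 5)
Your proof is correct and follows essentially the same route as the paper's: the support condition on $g$ converts a cutoff in the relative coordinate $r_i-r_j$ into compact support in $x$, which makes the near piece compact, while the far piece is killed in operator norm by the $L^2$--tail of $F$. The only difference is cosmetic: the paper truncates sharply at $|r_i-r_j|=k$ and bounds the remainder through the explicit Yukawa kernel of the resolvent in the relative coordinate (Eq.~(\ref{muta2})), whereas you use smooth approximants and derive the bound $\|h(r_i-r_j)(H_0+1)^{-1}\|\leq C\|h\|_{L^2(\mathbb{R}^3)}$ from the Sobolev embedding $H^2(\mathbb{R}^3)\hookrightarrow L^\infty(\mathbb{R}^3)$.
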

\begin{proof}
It suffices to consider the case $F\in L^2 (\mathbb{R}^3)$ (the case $F \in L_\infty^\infty (\mathbb{R}^3)$ trivially follows from Lemma~7.11 in \cite{teschl}). For $k=1,2,\ldots$ we can write
\begin{equation}\label{muta}
gF_{ij}(H_0 + 1)^{-1} = \chi_{\{x|\: |r_i - r_j | <k\}}gF_{ij}(H_0 + 1)^{-1} + \chi_{\{x|\: |r_i - r_j | \geq k\}}gF_{ij}(H_0 + 1)^{-1} .
\end{equation}
The first operator on the rhs is compact (Lemma~7.11 in \cite{teschl}). We need to show that the second one goes to zero in norm when $k \to \infty$ (in this case the operator on the lhs is compact as a norm-limit of compact operators). The following integral estimate of the square of its norm is trivial
\begin{equation}\label{muta2}
\bigl\| \chi_{\{x|\: |r_i - r_j | \geq k\}}gF_{ij}(H_0 + 1)^{-1}\bigr\|^2 \leq \frac 1{(4\pi)^2}
\int_{|r|\geq k} d^3 r\;  |F(r)|^2 \int d^3 r' \; \frac{e^{-2|r'|}}{|r'|^2} .
\end{equation}
Because $F\in L^2 (\mathbb{R}^3)$ the rhs goes to zero as $k \to \infty$. \end{proof}

\section{The Case of Three Particles}\label{sec:3}

We apply the framework of Sec.~\ref{sec:2} to the system of three particles with non-positive potentials. The case $N>3$ and potentials taking both signs would be considered elsewhere. For simplicity we take the parameter $\lambda > 0$ as a coupling constant of the interaction (see \cite{klaus,klaus2})
\begin{gather}
    H(\lambda) = H_0 - \lambda V  , \label{xc31aa} \\
V =  \sum_{1 \leq i<j \leq 3} V_{ij} (r_i - r_j ) . \label{:xc31aa}
\end{gather}
We shall need the following additional requirements
\begin{list}{R\arabic{foo}}
{\usecounter{foo}
    \setlength{\rightmargin}{\leftmargin}}
\setcounter{foo}{5}
\item
$V_{ij} \geq 0$ and $\lambda V_{ij} (y) \leq F(y)$, where $F \in L^2 (\mathbb{R}^3) \cap L^1 (\mathbb{R}^3) $ and $\lambda$ takes values as defined in R1.
\item
There exists $\epsilon > 0$ such that $H_0 - (\lambda + \epsilon )V_{ij} \geq 0$ for all $\lambda$ defined in R1 and all pair potentials $V_{ij}$.
\end{list}
Requirement R7 means that the two--particle subsystems have no bound states with negative energy and no resonances at zero energy. This results in $E_{thr}(\lambda) = 0$. Our aim is to prove
\begin{theorem}\label{th:2}
Suppose $H(\lambda)$ defined in (\ref{xc31aa})--(\ref{:xc31aa}) satisfies R1, R4-7. Then for $n \to \infty$ the sequence $\psi_n$ does not spread and there exists a bound state at threshold $\psi_{cr} \in D(H_0)$ such that $H(\lambda_{cr})\psi_{cr} = 0$.
\end{theorem}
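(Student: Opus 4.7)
The plan is to invoke Theorem~\ref{th:1}: once I show $\psi_n := \psi(\lambda_n)$ does not (totally) spread, a threshold bound state is automatic. Since the theorem actually asserts the stronger non-spreading property, I would target that directly via Lemma~\ref{lem:4}, which reduces the task to showing that every weakly convergent subsequence of $\{\psi_n\}$ converges in norm. Lemma~\ref{lem:1} gives $\sup_n \|H_0 \psi_n\|<\infty$, so Banach--Alaoglu together with Lemma~\ref{lem:2} yields a subsequence $\psi_n \wto \psi_{cr}$ with $\psi_{cr} \in D(H_0)$, and only the upgrade to norm convergence remains.

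The main tool is a Faddeev-type decomposition. Since $E_n < E_{thr}(\lambda_n)=0$, define the three components $\psi_\alpha^n := (H_0 - E_n)^{-1}\lambda_n V_\alpha \psi_n$, which sum to $\psi_n$ and satisfy
\begin{equation*}
(H_0 - \lambda_n V_\alpha - E_n)\psi_\alpha^n = \lambda_n V_\alpha(\psi_\beta^n + \psi_\gamma^n), \qquad \{\alpha,\beta,\gamma\}=\{1,2,3\}.
\end{equation*}
Requirement R7 is the central input: $H_0 - \lambda_n V_\alpha \geq \epsilon V_\alpha \geq 0$ with a uniform margin encoding the absence of a zero-energy resonance in the pair. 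This keeps the resolvent $(H_0 - \lambda_n V_\alpha - E_n)^{-1}$ well-controlled as $E_n \uparrow 0$, and combined with Lemma~\ref{lem:1} it gives $\sup_n \|H_0 \psi_\alpha^n\|<\infty$, so each Faddeev component has weakly convergent subsequences with limits in $D(H_0)$ by Lemma~\ref{lem:2}(a).

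To promote weak to norm convergence of each $\psi_\alpha^n$, I would introduce a smooth, homogeneous-of-degree-zero partition of unity $1=\sum_\alpha \eta_\alpha$ on $\mathbb{R}^{6}\setminus\{0\}$, with $\eta_\alpha$ supported in the cluster cone where the pair $\alpha=(ij)$ is closest to coincidence and $\eta_\alpha=0$ on $\{|r_k-r_l| < \delta |x|\}$ for every other pair $(kl)\neq(ij)$. Lemma~\ref{lem:5} then makes $\eta_\alpha V_\beta$ relatively $H_0$-compact for all $\beta \neq \alpha$. Inserting $1=\sum_\alpha \eta_\alpha$ into the Faddeev equations and iterating once produces compositions of the form $\eta_\gamma(H_0-\lambda_n V_\alpha-E_n)^{-1}\lambda_n V_\alpha \eta_\beta$ with $\{\alpha,\beta,\gamma\}$ distinct, in each of which a cluster cut-off sits against a ``foreign'' potential; via the uniform resolvent bound from R7 these factor through relatively $H_0$-compact operators, and therefore convert weak convergence of $\{\psi_n\}$ into norm convergence by Lemma~\ref{lem:2}(b).

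The main obstacle is the ``diagonal'' contribution $\eta_\alpha \psi_\alpha^n$, where the cut-off and the potential lie on the same cluster cone and only the spectator direction (the third particle's coordinate relative to the pair) is free. Here R7 must be used with full force: the strict subcritical margin $\epsilon$ is precisely what rules out the failure mode in which $\eta_\alpha \psi_\alpha^n$ places the pair in a marginally bound state while sending the third particle to infinity. Quantitatively this should give decay in the spectator variable and hence tightness of $\eta_\alpha \psi_\alpha^n$. Assembling these pieces, each $\psi_\alpha^n$ admits norm-convergent subsequences, so does $\psi_n=\sum_\alpha \psi_\alpha^n$; Lemma~\ref{lem:4} then gives the non-spreading claim, and Theorem~\ref{th:1} delivers $\psi_{cr}\in D(H_0)$ with $H(\lambda_{cr})\psi_{cr}=0$.
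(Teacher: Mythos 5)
Your skeleton matches the paper's: reduce to showing every weakly convergent subsequence of $\psi_n$ converges in norm (Lemma~\ref{lem:4} plus Theorem~\ref{th:1}), use a Faddeev-type rearrangement of the Schr\"odinger equation, and let R7 carry the weight. But two steps do not hold up. First, there is no ``uniform resolvent bound from R7'' on $(H_0-\lambda_n V_\alpha - E_n)^{-1}$: by R7 the operator $H_0-\lambda_n V_\alpha$ is nonnegative with essential spectrum $[0,\infty)$ (the spectator direction is free), so $\|(H_0-\lambda_n V_\alpha-E_n)^{-1}\|=1/|E_n|\to\infty$. What R7 actually controls, via the Birman--Schwinger principle, is the sandwiched operator $\lambda_n V_{ij}^{1/2}(H_0+k_n^2)^{-1}V_{ij}^{1/2}$, whose norm stays below $1-\varepsilon$ (this is Lemma~\ref{lem:7}); any argument that routes the singular resolvent through a compact factor without such a sandwich will fail in the limit $E_n\uparrow 0$.

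Second, and more seriously, you park the entire difficulty in the ``diagonal'' term and then assert that R7 ``should give decay in the spectator variable and hence tightness of $\eta_\alpha\psi_\alpha^n$.'' That is precisely the nontrivial content of the theorem, not a consequence one can wave at: the companion paper \cite{submit} shows that when a two-body zero-energy resonance is present the spectator direction does spread and the threshold state need not be $L^2$, so whatever mechanism converts R7 into spectator-direction tightness must be exhibited explicitly. The paper does this in two separate moves that your sketch does not reproduce: (i) Lemma~\ref{lem:9} proves norm convergence of $V_{ij}^{1/2}\psi_n$ by an IMS localization in which each localized piece satisfies $J_l H_l J_l\ge 0$ and the quadratic-form inequality $H_0-\lambda V_{ij}\ge\epsilon V_{ij}$ from R7 is applied to the one piece where the pair $(ij)$ can coalesce; and (ii) the weighted resolvent identity (\ref{nnr5}) with the weights $B_{ij}^{-1}(k_n)$, whose symbol $[z+1+t(p_y)]^{-1}$ damps small spectator momenta, is what makes the off-diagonal operators $B_{ij}^{-1}\mathcal{C}_{ij;lm}$ uniformly bounded and strongly continuous at $z=0$ (Lemma~\ref{lem:8}); the unweighted $\mathcal{C}_{ij;lm}(z)$ are not norm-continuous there. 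Without an analogue of these two ingredients your argument has a genuine gap exactly where the theorem is hard.
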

We shall defer the proof, which boils down to the construction of Faddeev equations \cite{faddeev}, see also \cite{sobolev,yafaev}, to the end of the section. Let us introduce an analytic operator function $B_{ij} (z)$ for each pair of particles $(ij)$. We shall construct $B_{12}$ and the other two operators are constructed similarly. We use Jacobi coordinates \cite{greiner} $x = [\sqrt{2 \mu_{12}}/\hbar](r_2 - r_1)$ and $y = [\sqrt{2 M_{12}}/\hbar](r_3 - m_1/(m_1+m_2) r_1 - m_2/(m_1+m_2) r_2)$, where $\mu_{ij} = m_i m_j /(m_i + m_j)$ and $M_{ij} = (m_i + m_j)m_l / (m_i + m_j + m_l)$ are reduced masses (the indices $i,j,l$ are all different). These coordinates make the kinetic energy operator take the form
\begin{equation}\label{ay4}
    H_0 = - \Delta_x - \Delta_y .
\end{equation}
Let $\mathcal{F}_{12}$ denote the partial Fourier transform in $L^2(\mathbb{R}^6)$ acting as follows
\begin{equation}\label{ay5}
\hat f(x,p_y)  =  \mathcal{F}_{12} f(x,y) = \frac 1{(2 \pi )^{3/2}} \int d^3 y \; e^{-ip_y \cdot \; y} f(x,y) .
\end{equation}
Then $B_{12}(z)$ is defined through
\begin{equation}\label{ay6}
B_{12}(z) = 1 + z  + \mathcal{F}^{-1}_{12} t(p_y) \mathcal{F}_{12},
\end{equation}
where
\begin{equation}\label{ay669}
t(p_y) = (\sqrt{|p_y|} - 1)\chi_{\{p_y | \; |p_y| \leq 1\}}.
\end{equation}
Similarly, using other Jacobi coordinates one defines $B_{ij}(z)$  and $\mathcal{F}_{ij}(z)$ for all particle pairs.
Note that $B_{ij}(z)$ and $B^{-1}_{ij}(z)$ are analytic on $\real z > 0$.

\begin{lemma}\label{lem:6}
The operator function in $L^2 (\mathbb{R}^6)$
\begin{equation}\label{ya23}
  \mathcal{A}_{ij} (z) = (H_0 + z^2)^{-1} V_{ij}^{1/2} B_{ij} (z)
\end{equation}
is uniformly bounded for $z \in (0,1]$, and strongly continuous for $z \to +0$.
\end{lemma}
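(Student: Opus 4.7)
The plan is to diagonalize both $B_{ij}(z)$ and $H_0$ in the $y$ variable via the partial Fourier transform $\mathcal{F}_{ij}$. Since $V_{ij}^{1/2}$ depends only on the $x$ coordinate, $\mathcal{F}_{ij}\,\mathcal{A}_{ij}(z)\,\mathcal{F}_{ij}^{-1}$ is a direct integral over $p_y\in\mathbb{R}^3$ of fiber operators on $L^2(\mathbb{R}^3_x)$,
\begin{equation*}
T(p_y,z)\;=\;\mu(p_y,z)\,(-\Delta_x+\nu^2)^{-1}V_{ij}^{1/2}(x),\qquad\nu^2:=p_y^2+z^2,
\end{equation*}
with scalar multiplier $\mu(p_y,z)=1+z+t(p_y)$; explicitly $\mu=z+\sqrt{|p_y|}$ for $|p_y|\leq 1$ and $\mu=1+z$ otherwise. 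Thus $\|\mathcal{A}_{ij}(z)\|$ equals $\sup_{p_y}\|T(p_y,z)\|$, and the lemma reduces to a uniform fiber bound and fiberwise strong continuity as $z\to +0$.

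For the uniform bound I would pass to the full Fourier side in $x$ as well. With $h:=V_{ij}^{1/2}g$, Plancherel gives
\begin{equation*}
\|T(p_y,z)g\|_2^2\;=\;\mu^2\int\frac{|\hat h(k_x)|^2}{(k_x^2+\nu^2)^2}\,dk_x\;\leq\;\frac{\mu^2}{(2\pi)^3}\|h\|_1^2\int\frac{dk_x}{(k_x^2+\nu^2)^2}\;=\;\frac{\pi^2}{(2\pi)^3}\,\frac{\mu^2}{\nu}\,\|h\|_1^2 ,
\end{equation*}
where I used $|\hat h|^2\leq(2\pi)^{-3}\|h\|_1^2$ and the explicit identity $\int_{\mathbb{R}^3}(k^2+\nu^2)^{-2}dk=\pi^2/\nu$. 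H\"older yields $\|h\|_1\leq\|V_{ij}^{1/2}\|_2\|g\|_2=\|V_{ij}\|_1^{1/2}\|g\|_2$, and R6 guarantees $\|V_{ij}\|_1<\infty$. The remaining algebraic bound
\begin{equation*}
\frac{\mu(p_y,z)^2}{\sqrt{p_y^2+z^2}}\;\leq\;C ,\qquad z\in(0,1],\ p_y\in\mathbb{R}^3 ,
\end{equation*}
follows by combining $\mu^2\leq 2(z^2+|p_y|)$ with $(z^2+|p_y|)/\sqrt{z^2+p_y^2}\leq 2$ when $|p_y|\leq 1$, and with $\nu\geq 1$ when $|p_y|>1$. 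Together this gives $\sup_{z\in(0,1],\,p_y}\|T(p_y,z)\|<\infty$.

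For strong continuity as $z\to +0$, Plancherel in $p_y$ converts the goal into
\begin{equation*}
\int dp_y\,\bigl\|[T(p_y,z)-T(p_y,0)]\hat g(\cdot,p_y)\bigr\|_{L^2(\mathbb{R}^3_x)}^{2}\;\to\;0 .
\end{equation*}
For each $p_y\neq 0$ the resolvent $(-\Delta_x+\nu^2)^{-1}$ is analytic in $z^2$ near zero and $\mu$ is continuous in $z$, so $T(p_y,z)\to T(p_y,0)$ in operator norm; at $p_y=0$ the previous estimate specializes to $\|T(0,z)\|^2\leq C\mu^2/\nu=Cz\to 0$, so $T(0,z)\to 0=T(0,0)$ in norm as well. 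The uniform fiber bound supplies the dominating function $4C\|\hat g(\cdot,p_y)\|^2$, integrable in $p_y$ by Plancherel, and dominated convergence closes the argument.

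The principal obstacle is the algebraic inequality $\mu(p_y,z)^2/\sqrt{p_y^2+z^2}\leq C$: the $1/\nu$ factor coming from $\int(k_x^2+\nu^2)^{-2}dk_x$ forces $\mu$ to vanish at least like $\nu^{1/2}$ as $(p_y,z)\to 0$, which is precisely what the $\sqrt{|p_y|}$ piece of $t$ together with the extra $+z$ in $B_{ij}(z)$ are engineered to produce. Once this cancellation is noticed, the rest is routine Fourier analysis and dominated convergence.
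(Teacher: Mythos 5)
Your proof is correct and follows essentially the same route as the paper: both conjugate by $\mathcal{F}_{ij}$, treat the operator fiberwise in $p_y$, and rest on the cancellation between the multiplier $1+z+t(p_y)$ coming from $B_{ij}(z)$ and the $1/\sqrt{p_y^2+z^2}$ singularity of the fiber resolvent, with $V_{ij}\in L^1$ (from R6) supplying the remaining finiteness. The only cosmetic difference is that the paper bounds the fiber operator by Cauchy--Schwarz on the position--space Yukawa kernel, splitting the multiplier into the $t+1$ and $z$ pieces as $K_1+K_2$, whereas you work on the Fourier side in $x$ using $\|\hat h\|_\infty\leq (2\pi)^{-3/2}\|h\|_1$ and $\int (k^2+\nu^2)^{-2}\,dk=\pi^2/\nu$; the two computations are equivalent, and your dominated--convergence argument for the strong continuity matches the paper's.
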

\begin{proof}
We take the case when $(ij) = (12)$, other indices are treated similarly. Instead of $\mathcal{A}_{12} (z) $ we consider  $\mathcal{F}_{12}  \mathcal{A}_{12} (z) \mathcal{F}^{-1}_{12}   $. We take $z \in (0,1)$ and split the operator
\begin{equation}\label{redew1}
    \mathcal{F}_{12}  \mathcal{A}_{12} (z) \mathcal{F}^{-1}_{12}   = K_1 (z) + K_2 (z) ,
\end{equation}
where
\begin{gather}
K_1 (z) = (- \Delta_x + p_y^2 + z^2)^{-1} V^{1/2}_{12}(\alpha x) [t(p_y) + 1]  , \label{redew2} \\
K_2 (z) = (- \Delta_x + p_y^2 + z^2)^{-1} V^{1/2}_{12}(\alpha x) z     \label{redew3}
\end{gather}
are integral operators acting on $\phi (x, p_y)\in L^2(\mathbb{R}^6)$ as
\begin{gather}
K_1 (z)\phi  = \frac 1{4 \pi} \int d^3 x'  \frac{e^{-\sqrt{p_y^2 + z^2} |x-x'|}}{|x-x'|}  V^{1/2}_{12}(\alpha x') [t(p_y) + 1] \phi (x', p_y) ,  \label{redew6} \\
K_2 (z)\phi  = \frac{z }{4 \pi} \int d^3 x'  \frac{e^{-\sqrt{p_y^2 + z^2} |x-x'|}}{|x-x'|}  V^{1/2}_{12}(\alpha x') \phi (x', p_y) . \label{redew7}
\end{gather}
The numerical coefficient $\alpha$ depends on masses $\alpha := \hbar /\sqrt{2\mu_{12}} $. Applying the Cauchy-Shwarz inequality we get
\begin{gather}
\bigl| K_1 (z) \phi \bigr|^2 \leq \int d^3 x'  \frac{e^{-2 |p_y| |x-x'|}}{|x-x'|^2} [t(p_y) +1]^2 V_{12} (\alpha x')  \times \int d^3 x' \bigl| \phi (x', p_y) \bigr|^2  ,  \label{redew4} \\
\bigl| K_2 (z)\phi \bigr|^2 \leq z^2  \int d^3 x'  \frac{e^{-2 z |x-x'|}}{|x-x'|^2} V_{12} (\alpha x')  \times \int d^3 x' \bigl| \phi (x', p_y) \bigr|^2 \label{redew5} ,
\end{gather}
where we have used $ z \in (0,1]$. Integrating (\ref{redew4}) and (\ref{redew5}) over $x$ leads to
\begin{gather}
\int d^3 x \bigl| K_1 (z) \phi \bigr|^2 \leq cc'c''\Bigl[\int d^3 x' \bigl| \phi (x', p_y) \bigr|^2 \Bigr] , \label{wadi1}\\
\int d^3 x \bigl| K_2 (z) \phi \bigr|^2 \leq cc'\Bigl[\int d^3 x' \bigl| \phi (x', p_y) \bigr|^2 \Bigr] , \label{wadi2}
\end{gather}
where $c, c', c''$ are the following finite constants
\begin{gather}
    c = \int d^3 x'  V_{12} (\alpha x') , \label{aijacts9} \\
    c' = \int d^3 x \frac{e^{-2 |x|}}{|x|^2}  , \label{aijacts10} \\
    c'' = \sup_{p_y \in \mathbb{R}^3} [t(p_y) + 1 ]^2 / |p_y| . \label{aijacts4}
\end{gather}
Integrating (\ref{wadi1})--(\ref{wadi2}) over $p_y$ gives that $K_{1,2} (z)$ is uniformly norm--bounded for $z \in (0,1]$. The strong continuity for $z \to +0$ follows from (\ref{redew6})--(\ref{redew7}) by the dominated convergence theorem. \end{proof}

It is convenient to introduce the notation
\begin{equation}\label{ay2}
    \mathcal{C}_{ik;jm}(z) = V^{1/2}_{ik} (H_0 + z^2)^{-1} V^{1/2}_{jm} .
\end{equation}
We shall need the following
\begin{lemma}\label{lem:7}
Suppose R1, R4-7 are satisfied and $k^2_n := -E(\lambda_n)$. Then the operators
\begin{equation}\label{ay2222}
    \mathcal{R}_{ij}(\lambda_n) = [1 - \lambda_n \mathcal{C}_{ij;ij}(k_n)]^{-1}
\end{equation}
are uniformly bounded for all $n$ and converge in norm when $n \to \infty$.
\end{lemma}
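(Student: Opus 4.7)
\emph{Proof plan.} The plan is to reduce $\mathcal{C}_{ij;ij}(z)$, via partial Fourier transform in $y$, to a direct integral over $p_y \in \mathbb{R}^3$ of two--body Birman--Schwinger operators $A(s) := V_{ij}^{1/2}(-\Delta_x + s^2)^{-1} V_{ij}^{1/2}$ on $L^2(\mathbb{R}^3_x)$, and then to use R7 to control both the spectral radius (uniformly in $n$) and the norm continuity of this direct integral at the threshold $z=0$. Since $V_{ij}$ depends only on $x$, $\mathcal{F}_y \mathcal{C}_{ij;ij}(z) \mathcal{F}_y^{-1}$ acts fibrewise as $A(\sqrt{p_y^2 + z^2})$, and because $A$ is monotone non--increasing in its argument, $\|\mathcal{C}_{ij;ij}(z)\| = \|A(z)\|$.

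For uniform boundedness, R7 combined with the Birman--Schwinger principle gives $\|A(0)\| \leq (\lambda_n + \epsilon)^{-1}$, whence
\begin{equation}
\|\lambda_n \mathcal{C}_{ij;ij}(k_n)\| \leq \lambda_n \|A(0)\| \leq \frac{\lambda_n}{\lambda_n + \epsilon},
\end{equation}
which is bounded strictly below $1$ uniformly in $n$ since $\lambda_n \to \lambda_{cr} < \infty$; a Neumann series then bounds $\|\mathcal{R}_{ij}(\lambda_n)\|$ uniformly.

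For the norm convergence I would verify that $s \mapsto A(s)$ is norm continuous on $[0,\infty)$ with $\|A(s)\| \to 0$ as $s \to \infty$, and hence uniformly continuous on $[0,\infty)$. Both properties are read off from the Hilbert--Schmidt norm of the integral kernel $V_{ij}^{1/2}(\alpha x)\, e^{-s|x-x'|}\, V_{ij}^{1/2}(\alpha x')/(4\pi |x - x'|)$ by the dominated convergence theorem, with the dominating function $V_{ij}(\alpha x) V_{ij}(\alpha x')/(4\pi|x-x'|)^2$ integrable by Hardy--Littlewood--Sobolev because R6 gives $V_{ij} \in L^1 \cap L^2 \subset L^{3/2}(\mathbb{R}^3)$. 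By R5 one has $k_n \to 0$, and the elementary bound $|\sqrt{p_y^2 + k_n^2} - |p_y|| \leq k_n$ together with uniform continuity of $A$ yields
\begin{equation}
\|\mathcal{C}_{ij;ij}(k_n) - \mathcal{C}_{ij;ij}(0)\| = \sup_{p_y} \|A(\sqrt{p_y^2 + k_n^2}) - A(|p_y|)\| \to 0.
\end{equation}
Combined with $\lambda_n \to \lambda_{cr}$ and uniform invertibility from the previous paragraph, the second resolvent identity delivers norm convergence of $\mathcal{R}_{ij}(\lambda_n)$ to $[1 - \lambda_{cr} \mathcal{C}_{ij;ij}(0)]^{-1}$.

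The main obstacle is the threshold behaviour of $A$ at $s=0$: a priori $A(0)$ could be unbounded, as happens exactly at a two--body zero--energy resonance. Hypothesis R7 is precisely the quantitative gap that rules this out, while R6 (in particular $V_{ij} \in L^1$) is what makes the critical double integral $\int\int V_{ij}(x) V_{ij}(x')/|x - x'|^2\, dx\, dx'$ finite via Hardy--Littlewood--Sobolev, so that $A(0)$ is actually Hilbert--Schmidt and dominated convergence can be applied.
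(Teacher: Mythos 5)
Your proposal is correct and follows essentially the same route as the paper: partial Fourier transform to reduce $\mathcal{C}_{ij;ij}(z)$ to fibred two--body Birman--Schwinger operators, R7 plus the Birman--Schwinger principle to get $\|\lambda_n\mathcal{C}_{ij;ij}(k_n)\|\leq\lambda_n/(\lambda_n+\epsilon)<1$ uniformly, and a Neumann--series/resolvent--identity argument for the norm convergence. In fact your write-up supplies the details (the Hilbert--Schmidt bound via Hardy--Littlewood--Sobolev under R6, monotonicity and uniform continuity of the fibre $A(s)$) that the paper's one-line proof only gestures at.
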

\begin{proof}
The operators $\mathcal{C}_{ij;ij}(z)$ are uniformly bounded for $z>0$ and converge in norm $\mathcal{C}_{ij;ij}(z) \to \mathcal{C}_{ij;ij}(0)$ for $z \to +0$ (this follows from writing out the kernel $\mathcal{F}_{ij} \mathcal{C}_{ij;ij}(k_n) \mathcal{F}^{-1}_{ij}$ explicitly, like in the proof of Lemma~\ref{lem:6}, and checking that $\| \mathcal{C}_{ij;ij}(z) - \mathcal{C}_{ij;ij}(0)\| \to 0$). From R7 and Birman--Schwinger principle \cite{reed,klaus} $\| \lambda_n C_{ij;ij} (k_n^2) \| < 1- \varepsilon$, where $\varepsilon > 0$ is a constant. From expanding (\ref{ay2222}) in von Neumann series it follows that $\mathcal{R}_{ij}(\lambda_n)$ converges in norm. \end{proof}

\begin{lemma}\label{lem:8}
For $(ik) \neq (jm)$ the operator function $B^{-1}_{ik} (z) \mathcal{C}_{ik;jm} (z) $
is uniformly norm--bounded for $z \in (0,1]$ and strongly continuous for $z \to +0$.
\end{lemma}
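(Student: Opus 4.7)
My plan is to work in the partial Fourier representation $\mathcal{F}_{ik}$ of Lemma~\ref{lem:6}, in which $B^{-1}_{ik}(z)$ becomes multiplication by $(1+z+t(p_y))^{-1}$, and to split
\begin{equation*}
B^{-1}_{ik}(z)\mathcal{C}_{ik;jm}(z)=B^{-1}_{ik}(z)\chi_{\{|p_y|>1\}}\mathcal{C}_{ik;jm}(z)+B^{-1}_{ik}(z)\chi_{\{|p_y|\leq 1\}}\mathcal{C}_{ik;jm}(z),
\end{equation*}
treating the two pieces differently. On $\{|p_y|>1\}$ the multiplier reduces to $(1+z)^{-1}\leq 1$, so this piece is bounded in operator norm by $\|\mathcal{C}_{ik;jm}(z)\|\leq\|\mathcal{C}_{ik;ik}(z)\|^{1/2}\|\mathcal{C}_{jm;jm}(z)\|^{1/2}$ (via the factorization $\mathcal{C}_{ik;jm}=V^{1/2}_{ik}(H_0+z^2)^{-1/2}\cdot(H_0+z^2)^{-1/2}V^{1/2}_{jm}$), which is uniformly bounded for $z\in(0,1]$ by Lemma~\ref{lem:7} and R7; strong continuity at $z=0^+$ for this piece follows from norm-continuity of the diagonal $\mathcal{C}_{ii;ii}(z)$.

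For the more delicate $\{|p_y|\leq 1\}$ piece, where $(1+z+t(p_y))^{-1}=(\sqrt{|p_y|}+z)^{-1}$ diverges as $p_y,z\to 0$, I would bound the Hilbert--Schmidt norm directly. The crucial point is that in the pair-$(ik)$ Jacobi coordinates $V^{1/2}_{jm}$ depends on $\beta x+\gamma y$ with $\gamma\neq 0$ --- this is precisely where the hypothesis $(ik)\neq(jm)$ enters --- so $\mathcal{F}_{ik}V^{1/2}_{jm}\mathcal{F}_{ik}^{-1}$ becomes an integral operator in $p_y$ with kernel proportional to $\hat V^{1/2}_{jm}((p_y-p'_y)/\gamma)$ (times an $x$-dependent phase). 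Composing with the Green's function $e^{-\sqrt{p_y^2+z^2}|x-x'|}/(4\pi|x-x'|)$ of $(-\Delta_x+p_y^2+z^2)^{-1}$ and with $V^{1/2}_{ik}(\alpha x)$, and then using the elementary identities
\begin{equation*}
\int V_{ik}(\alpha x)\,dx=\frac{\|V_{ik}\|_{L^1}}{|\alpha|^3},\quad \int\frac{e^{-2a|x-x'|}}{|x-x'|^2}\,dx'=\frac{2\pi}{a},\quad \int|\hat V^{1/2}_{jm}(q)|^2\,dq=\|V_{jm}\|_{L^1}
\end{equation*}
(with $a=\sqrt{p_y^2+z^2}$; the last by Plancherel using $V_{jm}\in L^1$ from R6), the Hilbert--Schmidt norm squared reduces, up to a constant depending only on $\|V_{ik}\|_{L^1}$, $\|V_{jm}\|_{L^1}$, $\alpha$ and $\gamma$, to
\begin{equation*}
\int_{|p_y|\leq 1}\frac{dp_y}{(\sqrt{|p_y|}+z)^2\sqrt{p_y^2+z^2}}.
\end{equation*}
The pointwise bounds $(\sqrt{|p_y|}+z)^2\geq|p_y|$ and $\sqrt{p_y^2+z^2}\geq|p_y|$ majorize the integrand by $|p_y|^{-2}$, which integrates to $4\pi$ over $\{|p_y|\leq 1\}\subset\mathbb{R}^3$, yielding the uniform HS bound, and therefore the uniform operator-norm bound, of this piece.

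Strong continuity at $z\to 0^+$ then follows piecewise: for the $\{|p_y|\leq 1\}$ piece, dominated convergence applied to the explicit kernel yields Hilbert--Schmidt (hence norm) convergence; for the $\{|p_y|>1\}$ piece, strong convergence of $\mathcal{C}_{ik;jm}(z)\phi$ on $\phi\in C_0^\infty$ follows from dominated convergence on its integral kernel and extends to all $\phi\in L^2$ by the uniform bound. The main obstacle I foresee is the kernel bookkeeping that makes the Plancherel step yield a constant independent of $p_y$: this step requires $\gamma\neq 0$, and hence $(ik)\neq(jm)$. Were $\gamma=0$, the factor $\hat V^{1/2}_{jm}((p_y-p'_y)/\gamma)$ would degenerate to $\delta(p_y-p'_y)$, the $p'_y$-integration would produce no finite Plancherel mass, and the HS estimate would diverge --- reflecting the fact that $B^{-1}_{ik}(z)\mathcal{C}_{ik;ik}(z)$ is genuinely unbounded as $z\to 0^+$ in the diagonal case.
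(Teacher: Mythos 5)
Your proof is correct and follows essentially the same route as the paper: pass to the $\mathcal{F}_{ik}$ representation, isolate the region $|p_y|\le 1$ where $B_{ik}^{-1}(z)$ is singular, control that piece by an explicit Hilbert--Schmidt estimate (where $\gamma\neq 0$, i.e. $(ik)\neq(jm)$, makes the Plancherel step produce a finite constant), and bound the remaining piece via $\|\mathcal{C}_{ik;jm}(z)\|\le\|\mathcal{C}_{ik;ik}(z)\|^{1/2}\|\mathcal{C}_{jm;jm}(z)\|^{1/2}$ and the Birman--Schwinger principle. The paper's splitting $B_{ik}^{-1}(z)=(1+z)^{-1}+\bigl[B_{ik}^{-1}(z)-(1+z)^{-1}\bigr]$ differs from your $\chi_{\{|p_y|>1\}}+\chi_{\{|p_y|\le 1\}}$ only cosmetically, since the difference term is likewise supported on $\{|p_y|\le 1\}$ and is majorized by the same $|p_y|^{-2}$ integrand.
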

\begin{proof}
We focus on $B^{-1}_{12} (z) \mathcal{C}_{12;23} (z) $, the other indices are treated similarly. Let us show that $\mathcal{F}_{12} B^{-1}_{12} (z) \mathcal{C}_{12;23} (z)\mathcal{F}_{12}^{-1}$ is uniformly bounded for $z \in (0,1]$.
\begin{equation}\label{splitti}
    \mathcal{F}_{12} B^{-1}_{12} (z) \mathcal{C}_{12;23} (z)\mathcal{F}_{12}^{-1} = K_1 (z)+ K_2(z),
\end{equation}
where
\begin{gather}
 K_1 (z)= \frac 1{z + 1} \mathcal{F}_{12} \mathcal{C}_{12;23} (z)\mathcal{F}_{12}^{-1} ,  \label{k1}\\
  K_2 (z)= \mathcal{F}_{12} \Bigl(B^{-1}_{12}(z) - \frac 1{z + 1}  \Bigr)\mathcal{C}_{12;23} (z)\mathcal{F}_{12}^{-1} .  \label{k2}
\end{gather}
$\mathcal{C}_{12;23} (z)$ is uniformly bounded for $z \in (0,1]$. Indeed, $\mathcal{C}_{12;23} (z)$ is a product of $V^{1/2}_{12}(H_0 + z^2)^{-1/2}$ and $(H_0 + z^2)^{-1/2} V^{1/2}_{23}$, which gives $\| \mathcal{C}_{12;23} (z)\| \leq \| \mathcal{C}_{12;12} (z)\|^{1/2} \| \mathcal{C}_{23;23} (z)\|^{1/2} \leq \| \mathcal{C}_{12;12} (0)\|^{1/2} \| \mathcal{C}_{23;23} (0)\|^{1/2} $, where each norm is bounded by the Birman--Schwinger principle. Thus $K_1 (z)$ is uniformly norm--bounded for $z \in (0,1]$.

Below we prove that the Hilbert-Schmidt norm of $K_2 (z)$ is bounded for $z\in (0,1]$. Let us first consider the Fourier transformed interaction term $\mathcal{F}_{12} V^{1/2}_{23} \mathcal{F}_{12}^{-1}$. In  Jacobi coordinates the interaction term has the form $V^{1/2}_{23} = V^{1/2}_{23}(\beta x + \gamma y)$, where $\beta $ and $\gamma \neq 0$ are real constants depending on masses $\beta = -m_2 \hbar / ((m_1 + m_2)\sqrt{2m_{12}})$ and $\gamma = \hbar/\sqrt{2M_{12}}$. The Fourier transformed operator acts on $\phi(x, p_y)$ as
 \begin{equation}\label{four1}
\mathcal{F}_{12} V^{1/2}_{23} \mathcal{F}_{12}^{-1} \phi = \frac 1{(2\pi)^{3/2} \gamma^3}\int d^3 p'_y  \widehat{V^{1/2}_{23}} ((p_y - p'_y)/\gamma) \exp{\Bigl\{i\frac{\beta}{\gamma} x \cdot (p_y - p'_y)\Bigr\}}\phi(x, p'_y),
\end{equation}
where $\widehat{V^{1/2}_{23}} \in L^2(\mathbb{R}^3)$ is a Fourier transform of $V^{1/2}_{23} \in L^2(\mathbb{R}^3)$.
For the kernel of $K_2 (z)$ we get
 \begin{gather}
K_2 (x,p_y;x',p'_y) = \frac 1{2^{7/2}\pi^{5/2} \gamma^3} \left[\frac 1{z + 1 + t(p_y)} - \frac 1{z + 1} \right] V_{12}^{1/2} (\alpha x) \label{four2}\\
\times \frac{e^{-\sqrt{p_y^2 + z^2} |x-x'|}}{|x-x'|}  \exp{\Bigl\{i\frac{\beta}{\gamma} x' \cdot (p_y - p'_y)\Bigr\}} \widehat{V^{1/2}_{23}} ((p_y - p'_y)/\gamma) . \label{four27}
\end{gather}
For the square of the Hilbert-Schmidt norm we obtain
 \begin{equation}\label{four7}
\| K_2 (z) \|^2_2 = \frac 1{2^7 \pi^5 } cc' {\tilde c} \int_{|p_y| \leq 1} d^3 p_y \; \left[\frac 1{z + \sqrt{|p_y|}} - \frac 1{z + 1} \right]^2 \frac 1{\sqrt{p_y^2 + z^2}} ,
\end{equation}
where $c, c'$ are defined in (\ref{aijacts9})--(\ref{aijacts10}) and
\begin{equation}\label{four8}
    {\tilde c} = \frac 1{\gamma^6} \int d^3 p'_y |\widehat{V^{1/2}_{23}} (p'_y/\gamma)|^2
\end{equation}
is finite because $\widehat{V^{1/2}_{23}} \in L^2$.
Estimating the integral in (\ref{four7}) we finally obtain
 \begin{equation}\label{four9}
\| K_2 (z) \|^2_2 \leq \frac 1{2^7 \pi^5 } cc' {\tilde c} \int_{|p_y| \leq 1} d^3 p_y \; \frac 1{p_y^2} = \frac 1{2^5 \pi^4} cc' {\tilde c}  . 
\end{equation}
The strong continuity of $K_2 (z)$ for $z \to +0$ follows from the explicit form of the kernel in (\ref{four2})--(\ref{four27}). The strong continuity of $K_1 (z)$ is proved similarly. \end{proof}

\begin{remark}
Though the operator sequence in Lemma~~\ref{lem:6} is, in fact, norm--continuous for $z \to +0$, the operator sequence in Lemma~\ref{lem:8} is not. To keep the same pattern of proof we prefer to stick to the strong continuity in both cases.
\end{remark}

\begin{lemma}\label{lem:9}
Suppose $H(\lambda)$ defined in (\ref{xc31aa})--(\ref{:xc31aa}) satisfies R1, R4-7. If $\psi_k$ is a weakly convergent subsequence of $\psi_n$, then $V^{1/2}_{ij}\psi_k$ converges in norm.
\end{lemma}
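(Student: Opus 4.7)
The plan is to set up a Faddeev-type system for the components $\phi_{ij,n}:=V_{ij}^{1/2}\psi_n$, regularize it with the operators $B_{kl}^{-1}(k_n)$ introduced above, and propagate the weak convergence $\psi_k\wto\psi_0$ into norm convergence of $\phi_{ij,k}$ using Lemmas~\ref{lem:7}--\ref{lem:8}.

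First I would derive the Faddeev equation. Setting $k_n^2:=-E(\lambda_n)>0$ and starting from $(H_0+k_n^2)\psi_n=\lambda_n V\psi_n$, I apply $V_{kl}^{1/2}(H_0+k_n^2)^{-1}$ on the left and isolate the diagonal term using $\mathcal{R}_{kl}(\lambda_n)$ to obtain
\begin{equation*}
\phi_{kl,n} = \lambda_n\,\mathcal{R}_{kl}(\lambda_n)\sum_{(ij)\neq(kl)}\mathcal{C}_{kl;ij}(k_n)\phi_{ij,n}.
\end{equation*}
Then I apply $B_{kl}^{-1}(k_n)$ on the left. Since $B_{kl}^{-1}(z)$ is a bounded Fourier multiplier in the Jacobi momentum $p_y$ conjugate to the spectator coordinate of the pair $(kl)$, it commutes with the multiplication operator $V_{kl}^{1/2}$ (a function of the relative coordinate $x$) and with $(H_0+k_n^2)^{-1}$, hence with $\mathcal{R}_{kl}(\lambda_n)$. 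This yields
\begin{equation*}
B_{kl}^{-1}(k_n)\phi_{kl,n} = \lambda_n\,\mathcal{R}_{kl}(\lambda_n)\sum_{(ij)\neq(kl)}\bigl[B_{kl}^{-1}(k_n)\mathcal{C}_{kl;ij}(k_n)\bigr]\phi_{ij,n},
\end{equation*}
which is exactly the form in which Lemmas~\ref{lem:7} and~\ref{lem:8} can be invoked.

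Under the hypothesis $\psi_k\wto\psi_0$, Lemma~\ref{lem:1} gives the uniform bound $\sup_k\|H_0\psi_k\|<\infty$, and a short density argument then yields $\phi_{ij,k}\wto V_{ij}^{1/2}\psi_0$ in $L^2$ with $\sup_k\|\phi_{ij,k}\|<\infty$. Lemma~\ref{lem:7} supplies norm convergence of $\mathcal{R}_{kl}(\lambda_k)$, and Lemma~\ref{lem:8} supplies uniform boundedness and strong continuity at $k=0$ of each off-diagonal operator $T_{kl,ij}(z):=B_{kl}^{-1}(z)\mathcal{C}_{kl;ij}(z)$. To upgrade the weak convergence of $\phi_{ij,k}$ to norm convergence of $T_{kl,ij}(k_k)\phi_{ij,k}$, I would use the decomposition $T_{kl,ij}(z)=K_1(z)+K_2(z)$ from the proof of Lemma~\ref{lem:8}: the uniform Hilbert--Schmidt bound (\ref{four9}) together with the explicit kernel formulas (\ref{four2})--(\ref{four27}) permit a dominated-convergence argument controlling the correction $K_2(k_n)-K_2(0)$ against the weakly convergent $\phi_{ij,n}$, while the factor $1/(z+1)$ makes $K_1(z)$ norm-continuous. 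Combined with Lemma~\ref{lem:7} this gives norm convergence of $B_{kl}^{-1}(k_k)\phi_{kl,k}$, and multiplying back by $B_{kl}(k_k)=1+k_k+\mathcal{F}_{kl}^{-1}t(p_y)\mathcal{F}_{kl}$, which is itself norm-continuous as $k_k\to 0$, produces norm convergence of $\phi_{kl,k}$ itself.

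The main obstacle is precisely the step just described, where the weakly convergent $\phi_{ij,k}$ meets the family $T_{kl,ij}(k_k)$ that is only strongly (not norm) continuous at $k=0$, as warned in the Remark following Lemma~\ref{lem:8}. The usual ``norm-continuous family of compact operators applied to a weakly convergent sequence converges in norm'' shortcut is therefore unavailable, and the correction term $[K_2(k_n)-K_2(0)]\phi_{ij,n}$ has to be controlled directly from the explicit kernel together with the Hilbert--Schmidt estimates rather than by operator-norm continuity.
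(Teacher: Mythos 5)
Your proposal takes a genuinely different route from the paper, and the route has a gap at exactly the point you yourself flag. The paper does not prove Lemma~\ref{lem:9} with the Faddeev system at all: it uses the IMS decomposition (\ref{ims}), shows that the localization term $K(\lambda_k)$ and the cross terms $J_sV_{ij}$ with $s\in\{i,j\}$ are relatively $H_0$ compact (Lemma~\ref{lem:5}), so that the corresponding quadratic forms on $\psi_k-\psi_{cr}$ vanish by Lemma~\ref{lem:2}; it then deduces $\bigl((\psi_k-\psi_{cr}),J_lH_l(\lambda_k)J_l(\psi_k-\psi_{cr})\bigr)\to 0$ from positivity of each $J_sH_sJ_s$, and converts this into the vanishing of $\bigl((\psi_k-\psi_{cr}),J_lV_{ij}J_l(\psi_k-\psi_{cr})\bigr)$ via the strict inequality $H_0-\lambda V_{ij}\geq\epsilon V_{ij}$ contained in R7. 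The Faddeev machinery with $B^{-1}_{ij}$ is reserved for the proof of Theorem~\ref{th:2}, where Lemma~\ref{lem:9} enters as an input precisely because the off-diagonal operators are only strongly continuous.

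This is the problem with your plan: in your final displayed equation you feed the components $\phi_{ij,k}=V_{ij}^{1/2}\psi_k$, of which you know only weak convergence, into the family $T_{kl,ij}(z)=B^{-1}_{kl}(z)\mathcal{C}_{kl;ij}(z)$, which by the paper's own Remark is uniformly bounded and strongly continuous but \emph{not} norm-continuous at $z=0$. Uniform boundedness plus strong continuity applied to a merely weakly convergent sequence yields nothing (take $T_n=1$ and $\phi_n\wto 0$ with $\|\phi_n\|=1$). To rescue the argument you would need norm convergence $T_{kl,ij}(k_n)\to T_{kl,ij}(0)$ together with compactness of $T_{kl,ij}(0)$, or some collective-compactness substitute. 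Your dominated-convergence sketch addresses at best the Hilbert--Schmidt piece $K_2$; it says nothing about $K_1(z)=\frac1{z+1}\mathcal{C}_{kl;ij}(z)$, for which neither norm continuity at $z=0$ nor compactness of the limit is established anywhere in the paper, and the Remark explicitly warns that the full family fails to be norm-continuous. Hence the crucial step ``weak convergence of $\phi_{ij,k}$ implies norm convergence of $T_{kl,ij}(k_k)\phi_{ij,k}$'' is not justified, and the proposal does not prove the lemma. The IMS/positivity argument is what closes this loop; the Faddeev equations then upgrade the norm convergence of $V^{1/2}_{ij}\psi_k$ to that of $\psi_k$ itself in the proof of Theorem~\ref{th:2}.
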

\begin{proof}
We use the functions of the IMS decomposition \cite{ims,teschl}, which for $s=1,2,3$ satisfy $J_s \in C^{2}
(\mathbb{R}^{3N-3})$, $J_s \geq 0$, $\sum_s J^{2}_s =1$ and $J_s (\lambda x) =
J_s (x)$ for $\lambda \geq 1$ and $|x|= 1$. Given these properties of $J_s$ there exists $C > 0$ \cite{ims,teschl} such that for $i \neq s$
\begin{equation}\label{ims5}
    \supp J_s \cap \{ x | |x| > 1 \} \subset \{x|\; |r_i - r_s | \geq C |x|\} .
\end{equation}

By the IMS formula \cite{ims,teschl} the
Hamiltonian $H(\lambda)$ can be decomposed as
\begin{equation}\label{ims}
    H (\lambda) = \sum_{s=1}^3 J_s H_s (\lambda) J_s + K(\lambda) ,
\end{equation}
where
\begin{gather}
H_s (\lambda) = H_0 - \lambda V_{lm}, \quad \quad (l\neq s, m \neq s)\\
K(\lambda) = - \lambda \sum_{s=1}^3  (V_{ls} + V_{ms} )  |J_s |^2 + \sum_{s=1}^3  |\nabla J_s |^2 \label{K} \quad \quad (l\neq s, m \neq s , l \neq m) .
\end{gather}

By condition of the lemma $\psi_k \wto \psi_{cr}$, where $\psi_{cr} \in D(H_0)$ by Lemma~\ref{lem:2}. We shall prove the lemma in three steps given through equations
\begin{gather}
(a) \quad \lim_{k \to \infty} \Bigl((\psi_k - \psi_{cr} ) , K(\lambda_k) (\psi_k - \psi_{cr} ) \Bigr) = 0 \label{kg0} \\
(b) \quad \lim_{k \to \infty} \Bigl((\psi_k - \psi_{cr} )  , H(\lambda_k)  (\psi_k - \psi_{cr} )  \Bigr)= 0 \label{kg0:b}\\
(c) \quad \lim_{k \to \infty} \Bigl((\psi_k - \psi_{cr} )  , V_{ij}  (\psi_k - \psi_{cr} )  \Bigr)= 0 . \label{kg0:c}
\end{gather}
From $(c)$ the statement of the lemma clearly follows. Let us
start with $(a)$. From R6 we have
\begin{equation}\label{tobe}
 |(f,K(\lambda) f)| \leq (f,\tilde K f) \quad \quad (\forall f \in D(H_0) ) ,
\end{equation}
where the operator $\tilde K$ is defined through
\begin{gather}
\tilde K = \lambda \sum_{s=1}^3  (F_{ls} + F_{ms} )  |J_s |^2 + \sum_{s=1}^3  |\nabla J_s |^2 \label{tilk1}  \quad \quad (l\neq s, m \neq s , l \neq m) .
\end{gather}
The first sum in (\ref{tilk1}) is relatively $H_0$ compact by Lemma~\ref{lem:5}, and the second sum is relatively $H_0$ compact because $ |\nabla J_s |^2 \in L^\infty_\infty (\mathbb{R}^{3N-3})$ (see \cite{ims}). Thus  $\tilde K$ is relatively $H_0$ compact and
\begin{equation}\label{j233}
((\psi_k - \psi_{cr} ) , \tilde K (\psi_k - \psi_{cr} ) ) \to 0
\end{equation}
follows from Lemma~\ref{lem:2}.
This proves (a). Rewriting the expression in (b) we obtain
\begin{gather}
\bigl( (\psi_k - \psi_{cr} ), H(\lambda_k) \: (\psi_k - \psi_{cr} )\bigr)
= E(\lambda_k) \bigl((\psi_k - \psi_{cr} ), \psi_k\bigr) - \label{tobe342}\\
\bigl((\psi_k -\psi_{cr} ), H(\lambda_{cr})\psi_{cr} \bigr)  - [\lambda_k - \lambda_{cr}]\bigl((\psi_k -\psi_{cr} ), V\psi_{cr} \bigr) \label{tobe34} ,
\end{gather}
where we have used $H(\lambda_k) = H(\lambda_{cr})
+ [\lambda_k - \lambda_{cr}]V$. All terms on the rhs of (\ref{tobe342})--(\ref{tobe34}) go to zero because $E(\lambda_k) \to 0$ and $\psi_k \wto \psi_{cr}$. It remains to be shown that $(c)$ is true.
\begin{gather}
\lim_{k \to \infty} \Bigl((\psi_k - \psi_{cr} )  , V_{ij}  (\psi_k - \psi_{cr} )  \Bigr)= \sum_{s=1}^3 \lim_{k \to \infty} \Bigl((\psi_k - \psi_{cr} )  , J_s V_{ij} J_s (\psi_k - \psi_{cr} )  \Bigr) \label{sima1}\\
= \lim_{k \to \infty} \Bigl((\psi_k - \psi_{cr} )  , J_l V_{ij} J_l (\psi_k - \psi_{cr} )  \Bigr) \quad \quad (l \neq i \neq j) \label{sima2},
\end{gather}
where we have used that $J_i V_{ij}$ and $J_j V_{ij}$ are relatively $H_0$ compact by Lemma~\ref{lem:5} and the corresponding scalar products vanish by Lemma~\ref{lem:2}.

From (a), (b) and (\ref{ims}) we obtain
\begin{equation}\label{tobe4}
\bigl((\psi_k - \psi_{cr} ), J_l  H_{l} (\lambda_k) J_l (\psi_k - \psi_{cr} )  \bigr)
\to 0 \quad \quad (\forall l) .
\end{equation}
Together with R7 this gives us
\begin{equation}\label{kotz1}
\lim_{k \to \infty} \left((\psi_k - \psi_{cr} )  , J_l V_{ij} J_l (\psi_k - \psi_{cr} )  \right) = 0\quad \quad (l \neq i \neq j) .
\end{equation}
Finally, comparing (\ref{kotz1}) and (\ref{sima1})--(\ref{sima2}) we conclude that (c) holds. \end{proof}

\begin{proof}[Proof of Theorem~\ref{th:2}]
It is enough to show that any weakly converging subsequence of $\psi_n$ converges in norm. Indeed, in this case by Lemma~\ref{lem:4} $\psi_n$ does not spread and thus by Theorem~\ref{th:1} there must exist a bound state at threshold. In order not to overload the notation with additional subscripts we keep the same notation for a weakly converging subsequence, that is we assume $\psi_n \wto \psi_{cr}$ and we must prove $\| \psi_n -\psi_{cr}\| \to 0$.

By Schr\"odinger equation for $k_n^2 = -E_n >0$
\begin{equation}\label{nnr1}
    \psi_n = \lambda_n \sum_{i<j} [H_0 + k_n^2]^{-1} V_{ij} \psi_n = \lambda_n \sum_{i<j} \mathcal{A}_{ij} (k_n)  \bigl[ B^{-1}_{ij} (k_n) V^{1/2}_{ij}\psi_n \bigr] ,
\end{equation}
where $\mathcal{A}_{ij}$ is defined in (\ref{ya23}). By Lemma~\ref{lem:6} $\psi_n$ converges in norm if the sequence $B^{-1}_{ij} (k_n) V^{1/2}_{ij}\psi_n $ does. The convergence of the latter we prove below. From (\ref{nnr1}) we obtain
\begin{equation}\label{nnr2}
    V^{1/2}_{ij} \psi_n = \lambda_n \sum_{l < m} \mathcal{C}_{ij;lm} (k_n) [V^{1/2}_{lm} \psi_n ] .
\end{equation}
Using (\ref{ay2222}) we rewrite (\ref{nnr2})
\begin{equation}\label{nnr4}
    V^{1/2}_{ij} \psi_n = \lambda_n \mathcal{R}_{ij} (k_n) \sum_{\substack{l < m \\
    (lm) \neq (ij)}} \mathcal{C}_{ij;lm}(k_n) (V^{1/2}_{lm} \psi_n ) .
\end{equation}
Now we act with $B^{-1}_{ij}(k_n)$ on both parts of (\ref{nnr4}) and use that it commutes with $\mathcal{R}_{ij} (k_n)$
\begin{equation}\label{nnr5}
    B^{-1}_{ij}(k_n) V^{1/2}_{ij} \psi_n = \lambda_n \mathcal{R}_{ij} (k_n) \sum_{\substack{l < m \\
    (lm) \neq (ij)}} B^{-1}_{ij}(k_n) \mathcal{C}_{ij;lm}(k_n) (V^{1/2}_{lm} \psi_n )   . 
\end{equation}
By Lemmas~\ref{lem:7},\ref{lem:8},\ref{lem:9} the rhs converges in norm.\end{proof}

\begin{acknowledgements}
The author would like to thank Prof. Walter Greiner for the warm hospitality at FIAS.
\end{acknowledgements}

\end{document}